\newtheorem{proposition}{Proposition}
\newtheorem{corollary}{Corollary}
\begin{document}

\preprint{APS/123-QED}

\title{Convergence condition of simulated quantum annealing for closed and open systems}% Force line breaks with \\

\author{Yusuke Kimura}
 \affiliation{International Research Frontiers Initiative, Tokyo Institute of Technology, Shibaura, Minato-ku, Tokyo 108-0023, Japan}%Lines break automatically or can be forced with \\

\author{Hidetoshi Nishimori} 
\affiliation{International Research Frontiers Initiative, Tokyo Institute of Technology, Shibaura, Minato-ku, Tokyo 108-0023, Japan} 
\affiliation{%
Graduate School of Information Sciences, Tohoku University, Sendai 980-8579, Japan 
}%
\affiliation{
RIKEN, Interdisciplinary Theoretical and Mathematical Sciences (iTHEMS), Wako, Saitama 351-0198, Japan
}%

%\date{\today}% It is always \today, today,
             %  but any date may be explicitly specified

\begin{abstract}
Simulated quantum annealing is a generic classical protocol to simulate some aspects of quantum annealing and is sometimes regarded as a classical alternative to quantum annealing in finding the ground state of a classical Ising model. We derive a generic condition for simulated quantum annealing to converge to thermal equilibrium at a given, typically low, temperature. 
Both closed and open systems are treated. We rewrite the classical master equation for simulated quantum annealing into an imaginary-time Schr\"odinger equation, to which we apply the imaginary-time variant of asymptotic adiabatic condition to deduce the convergence condition. The result agrees qualitatively with a rigorous convergence condition of simulated quantum annealing for closed systems, which was derived from the theory of inhomogeneous Markov process. Also observed is qualitative agreement with a rigorous convergence condition of quantum annealing for closed systems under the real-time Schr\"odinger dynamics. This coincidence of convergence conditions for classical stochastic processes for simulated quantum annealing and the real-time quantum dynamics for quantum annealing is highly non-trivial and calls for further scrutiny.
\end{abstract}

%\keywords{Suggested keywords}%Use showkeys class option if keyword
                              %display desired
\maketitle

\section{Introduction}

Quantum annealing (QA) is a metaheuristic, a generic approximate algorithm, for combinatorial optimization problems \cite{Kadowaki1998,Farhi2001,Santoro2002,Das2008,Morita2008,Albash2018,Hauke2020,crosson2021}.  Recent years have seen its developments toward applications to a broader class of problems beyond combinatorial optimization such as quantum simulations \cite{Harris2018,King2018,Nishimura2020,Gardas2018,Bando2020,Weinberg2020,king2022a,King2022b} and optimization with continuous variables \cite{Abel2021a,Abel2021b,Abel2021c,Koh2022}.

The classical algorithm of simulated quantum annealing (SQA) has often been used to simulate some properties of quantum annealing \cite{Kadowaki1998t,Santoro2002,Heim2015} and shares the ultimate goal with QA to find the ground state of a classical Ising model. SQA uses stochastic processes to statistically sample the finite, typically very low (and ideally zero), temperature equilibrium state of the transverse-field Ising model written as a classical Ising model by the Suzuki-Trotter formula \cite{Suzuki}.  SQA is not designed to reproduce dynamical properties of QA as the former follows classical stochastic processes whereas the latter is realized by the time-dependent Schr\"odinger equation, which are  fundamentally different processes. 
This point is clear also from the fact that the Suzuki-Trotter formula is derived by rewriting the equilibrium partition function of a quantum system into the equilibrium partition function of a corresponding classical Ising model.

Low-temperature equilibrium behaviors of the transverse-field Ising model are expected to be captured by SQA as long as the rate of change of time-dependent parameters in the Hamiltonian is sufficiently slow because then the system follows quasi-equilibrium state. It is therefore natural to ask how slow is slow enough for this expectation to stay valid. Equivalently, we may ask, under what conditions on the time dependence of parameters, SQA successfully reaches thermal equilibrium.

This question has indeed been answered for generic closed systems, i.e., systems isolated from the environment
\footnote{As will be discussed later, the term ``environment'' here does not refer to classical thermal effects used in equilibrium simulations of SQA but means experimentally-relevant quantum-mechanical dissipation coming from interactions with a large number of degrees of freedom outside of qubit systems \cite{Caldeira1983}}, where the theory of inhomogeneous Markov process was used to analyze the stochastic process under SQA \cite{Morita2006,Morita2008}. It has been shown that a generic condition for convergence to a finite-temperature equilibrium state looks qualitatively similar to the corresponding generic convergence condition of QA for the transverse-field Ising model with time-dependent coefficients evolving under the real-time Schr\"odinger dynamics.  More precisely, in both cases of QA and SQA, the coefficient of the transverse-field term is to be decreased by a power law of time in the long-time limit with the power inversely proportionally to the system size. This apparent qualitative coincidence of convergence conditions for completely different dynamics is highly non-trivial, and it should be useful to provide a different perspective on this problem.  Also from the technical point of view, the proof provided in Refs.~\cite{Morita2006,Morita2008} is rather abstract and it is not easy to extract relevant physics out of it.

To mention a few related studies, Refs.~\cite{Albash_2012, Venuti2017} studied the quantum dynamics of open systems and Ref.~\cite{crosson2020} discussed the relaxation time of time-independent models. Though not directly relevant to the present study, those papers represent interesting developments from the general perspective of dynamical properties of QA or SQA.

The present paper represents part of efforts toward the above-mentioned goal concerning convergence conditions by deriving a condition of convergence of SQA for closed and open systems, based on an approximate, but asymptotically accurate, adiabatic condition for the imaginary-time Schr\"odinger equation. The classical stochastic dynamics of SQA following the master equation is transformed to the imaginary-time Schr\"odinger equation through a well-established prescription \cite{Nishimori2014} (see also Refs.~\cite{Henley2004,Castelnovo2005}). Then the leading-order term of the asymptotic expansion of the adiabatic condition for the imaginary-time  Schr\"odinger dynamics is used to derive our conclusion on the rate of time development of a coefficient. The logic is not completely rigorous mathematically because higher-order terms of the asymptotic expansion are ignored. 
Nevertheless, it turns out that the resulting convergence condition is in qualitative agreement with the corresponding rigorous condition derived from the theory of inhomogeneous Markov process for closed systems \cite{Morita2006}.

This paper is structured as follows: In Sec.~\ref{sec2}, we derive the imaginary-time Schr\"odinger equation from the classical master equation for the time-dependent transverse-field Ising model. In Sec.~\ref{sec3}, we deduce a convergence condition for SQA to a thermal equilibrium state by applying the imaginary-time variant of adiabatic condition. Based on the result in Sec.~\ref{sec3}, we derive a condition on the coefficient of the transverse field that ensures the convergence to a thermal equilibrium state  in Sec.~\ref{sec4}. This is a central result of this paper. We discuss the result and conclude the paper in Sec.~\ref{sec5}.

\section{Simulated quantum annealing and imaginary-time Schr\"odinger equation}
\label{sec2}

In this section, we recapitulate the general process to rewrite the classical master equation for the Ising model into an imaginary-time Schr\"odinger equation  under the context of SQA, closely following Ref.~\cite{Nishimori2014}.

\subsection{Classical master equation}
Our starting point is the master equation for the Markovian dynamics of the Ising model \cite{Nishimori2011book}:
\begin{equation}
\label{master eq in sec1}
\frac{d P_{\sigma}(t)}{dt}=\sum_{\sigma'} W_{\sigma\sigma'}P_{\sigma'}(t).
\end{equation}
Here, $\sigma$ represents a configuration of $N$ Ising spins,  $\sigma=\{\sigma_1, \sigma_2, \ldots, \sigma_N\}$, and $P_\sigma(t)$ is used to denote the probability of the system in configuration $\sigma$ at time $t$. We use the notation $\hat{W}$ to denote the transition matrix of size $2^N$ whose entries are: $(\hat{W})_{\sigma\sigma'}=W_{\sigma\sigma'}$. An off-diagonal entry of the matrix $\hat{W}$ can be written as \cite{Nishimori2014}:
\begin{equation}
W_{\sigma\sigma'}= w_{\sigma\sigma'} e^{-\frac{1}{2}\beta\left(H_0(\sigma)- H_0(\sigma') \right)} \hspace{10mm} (\sigma\ne \sigma'),
\end{equation}
where $\beta$ is the inverse temperature and $H_0(\sigma)$ is the Hamiltonian of the Ising model under consideration.
We employ the heat bath method with single-spin flips in this paper, where $w_{\sigma\sigma'}$ is given by
\begin{equation}
w_{\sigma\sigma'}=\frac{1}{ e^{\frac{1}{2}\beta\left(H_0(\sigma')- H_0(\sigma) \right)}+e^{-\frac{1}{2}\beta\left(H_0(\sigma')- H_0(\sigma) \right)}},
\end{equation}
where $\sigma$ and $\sigma'$ differ only at a single spin. We set $W_{\sigma\sigma'}=0$ for other types of transitions such as multi-spin flips.
Owing to the probability conservation condition $\sum_{\sigma'}W_{\sigma'\sigma}=0$, a diagonal entry of the matrix $\hat{W}$, $W_{\sigma\sigma}$, is expressed as
\begin{equation}
W_{\sigma\sigma}=-\sum_{\sigma'(\ne\sigma)}W_{\sigma'\sigma}=-\sum_{\sigma'(\ne\sigma)} e^{-\frac{1}{2}\beta\left(H_0(\sigma')- H_0(\sigma) \right)}\, w_{\sigma'\sigma}.
\end{equation}

\subsection{Imaginary-time Schr\"odinger equation for SQA}

For simplicity of presentation, we first analyze the case of closed systems, i.e., systems described solely by the classical representation of the transverse-field Ising model at finite temperature to be defined below.
Generalization to open systems is relatively straightforward and will be discussed in a later section.

The Hamiltonian of the transverse-field Ising model for QA is 
\begin{equation}
\label{eq:Ising}
H(t)=H_{\rm Ising}+H_{\rm TF}(t),
\end{equation}
where $H_{\rm Ising}$ denotes the Hamiltonian of the Ising model
\begin{align}
\label{eq:classical_Ising}
    H_{\rm Ising}=-\sum_{\langle jj'\rangle}J_{jj'}\,\sigma_j^z\sigma_{j'}^z
\end{align}
with $\langle jj'\rangle$ for pairs of interacting spins, and $H_{\rm TF}(t)$ is for the transverse field with a time-dependent coefficient:
\begin{align}
\label{eq:tf}
    H_{\rm TF}(t)=-\Gamma (t)\sum_{j=1}^N\sigma_j^x.
\end{align}
Here $\sigma_j^z$ and $\sigma_j^x$ are the $z$ and $x$ components of Pauli matrix at site $j$, respectively.

SQA runs Markov-chain Monte Carlo steps for the transverse-field Ising model rewritten as the corresponding classical Ising model derived by
the Suzuki-Trotter formula \cite{Suzuki} with the Hamiltonian:
\begin{equation}
\begin{split}
\label{Hamiltonian H0 in sec1}
& \beta H_0(\sigma)= \\ 
& -\sum_{k=1}^M\left(\sum_{\langle jj'\rangle}\frac{\beta J_{jj'}}{M} \sigma_j^{(k)}\sigma_{j'}^{(k)}+\sum_{j=1}^N \gamma(t) \sigma_j^{(k)}\sigma_j^{(k+1)}\right),
\end{split}
\end{equation}
where $M$ is the total number of Trotter slices running from $k=1$ to $M$, and the time dependent coefficient of the second term $\gamma(t)$ is given by
\begin{equation}
\label{def gamma in sec1}
\gamma(t)=\frac{1}{2} \log\left( {\rm coth} \frac{\beta \Gamma(t)}{M} \right).
\end{equation}
In Eq.~\eqref{Hamiltonian H0 in sec1},
$\sigma_j^{(k)}(=\pm 1)$ is used to denote a classical Ising spin at the $j$th site on the $k$th Trotter slice.
Notice that Eq.~\eqref{Hamiltonian H0 in sec1} has been derived with $\Gamma(t)$ regarded as a fixed parameter as can be understood from the fact that the Suzuki-Trotter formula treats the equilibrium partition function.

In SQA, $\beta$ and $M$ are both kept constant at large values, typically choosing $\beta/M$ to be of order unity, in order to reproduce the zero-temperature properties of the original transverse-field Ising model as faithfully as possible. Only the coefficient $\Gamma (t)$ (and thus $\gamma (t)$) changes with time $t$. 
It is important to notice here that the dynamics of SQA with time-dependent parameters is driven by the classical stochastic process of Monte Carlo simulation, which is completely different from the original (zero-temperature) quantum dynamics of the Schr\"odinger equation for the transverse-field Ising model of Eq.~\eqref{eq:Ising}. One should not expect that the dynamical, i.e. transient, properties of the original quantum system of Eq.~\eqref{eq:Ising} can be reproduced by the classical method of SQA \cite{Bando2021}. Nevertheless, QA and SQA share the same goal to find, in the long-time limit $t\to\infty$, the ground state of the classical Ising model of Eq.~\eqref{eq:classical_Ising} by tuning the parameter $\Gamma (t)$ appropriately as a function of time $t$.
The following discussions are for a condition for SQA with a time-dependent parameter to converge to a thermal equilibrium state at given finite values of $\beta$ and $M$ in the long-time limit, which is different from a thermal relaxation process in  a system with time-independent parameters.

The results deduced in this paper hold for any (real and positive) value of $\beta$. However, for practical applications of the results to a closed system, we are mainly interested in the limit at which $\beta$ is chosen very large.

Let us rewrite the classical master equation for SQA in terms of an imaginary-time Schr\"odinger equation. 
The quantum Hamiltonian dictating the latter equation, $\hat{H}$, can be constructed from the transition matrix $\hat{W}$ as \cite{Nishimori2014}:
\begin{equation}
\label{quantum Hamiltonian in sec1}
\hat{H}:=-e^{\frac{\beta \hat{H}_0}{2}} \hat{W} e^{\frac{-\beta \hat{H}_0}{2}}.
\end{equation}
In Eq.~\eqref{quantum Hamiltonian in sec1}, $\hat{H}_0$ is used to represent a diagonal matrix whose diagonal entries are: $(\hat{H}_0)_{\sigma\sigma}=H_0(\sigma)$. 

The quantum Hamiltonian has the following explicit expression:
\begin{equation}
\label{Hamiltonian explicit in sec1}
\hat{H}=\sum_{\sigma}\sum_{\sigma'} w_{\sigma\sigma'} \left( e^{-\frac{1}{2}\beta\left(H_0(\sigma')- H_0(\sigma) \right)}\ket{\sigma}\bra{\sigma}-\ket{\sigma'}\bra{\sigma} \right).
\end{equation} 
Since the configurations $\sigma$ and $\sigma'$ differ only at a single site, the $j$th site on the $k$th Trotter slice, the difference $\beta H_0(\sigma')- \beta H_0(\sigma)$ is given by the following expression, owing to the expression \eqref{Hamiltonian H0 in sec1} of the Hamiltonian $H_0$:
\begin{eqnarray}
\label{difference in sec1}
\beta H_0(\sigma')- \beta H_0(\sigma)= && \frac{2\beta}{M}\sum_{j' \, ({\rm n.n.}\, j)}J_{jj'}\sigma_j^{(k)}\sigma_{j'}^{(k)} \\ \nonumber
&&+2\gamma(t)\sigma_j^{(k)} \left( \sigma_{j}^{(k+1)}+ \sigma_{j}^{(k-1)}\right).
\end{eqnarray}
The summation symbol on the right-hand side expresses the sum over sites $j'$ that are interacting with $j$. For notational simplicity, we denote the difference \eqref{difference in sec1} by $-2\beta H_{j,k}$:
\begin{align}
\label{def Hjk in sec1}
&-\beta H_{j,k}:=\nonumber\\
&\frac{\beta}{M}\sum_{j' \, ({\rm n.n.}\, j)}J_{jj'}\sigma_j^{(k)}\sigma_{j'}^{(k)}+\gamma(t)\sigma_j^{(k)} \left( \sigma_{j}^{(k+1)}+ \sigma_{j}^{(k-1)}\right). 
\end{align}

Utilizing the introduced notation, the diagonal coefficient in Eq.~\eqref{Hamiltonian explicit in sec1} can be rewritten as:
\begin{equation}
\label{diagonal in sec1}
w_{\sigma\sigma'} e^{-\frac{1}{2}\beta\left(H_0(\sigma')- H_0(\sigma) \right)}=\frac{e^{\beta H_{j,k}}}{ e^{\beta H_{j,k}}+ e^{-\beta H_{j,k}}}.
\end{equation}
Similarly, the off-diagonal coefficient in Eq.~\eqref{Hamiltonian explicit in sec1} is:
\begin{equation}
\label{offdiagonal in sec1}
w_{\sigma\sigma'}=\frac{1}{ e^{\beta H_{j,k}}+ e^{-\beta H_{j,k}}}.
\end{equation}
Using a matrix-vector notation, the master equation \eqref{master eq in sec1} can be rewritten as follows:
\begin{equation}
\label{master eq matrix form in sec1}
\frac{d\hat{P}(t)}{dt} = \hat{W}(t) \hat{P}(t),
\end{equation}
where $\hat{P}(t)$ denotes a vector whose $\sigma$th entry $(\hat{P}(t))_{\sigma}$ is $P_\sigma(t)$: $(\hat{P}(t))_{\sigma}= P_\sigma(t)$.

If we define
\begin{align}
\label{state_correspondence}
    \ket{\phi(t)}:=e^{\frac{1}{2}\beta \hat{H}_0}\hat{P}(t),
\end{align}
it is straightforward to confirm that the master equation \eqref{master eq matrix form in sec1} is rewritten as the following imaginary-time Schr\"odinger equation \cite{Nishimori2014} :
\begin{equation}
\label{IT sch in sec1}
-\frac{d\ket{\phi(t)}}{dt}=\left( \hat{H}(t)-\frac{1}{2}\frac{d}{dt}(\beta \hat{H}_0) \right)\, \ket{\phi(t)}.
\end{equation}
In the long-time limit, time dependence in the Hamiltonian $\hat{H_0}$ becomes arbitrarily small as $\Gamma (t)$ gradually approaches zero, and the expression on the outer parentheses in the above equation approaches $\hat{H}(t)$.
In this limit, the ground state of the Hamiltonian $\hat{H}(t)$ corresponds to the equilibrium state of the master equation \eqref{master eq matrix form in sec1} as shown in Ref.~\cite{Nishimori2014}.

We apply the adiabatic condition \cite{Morita2008} to the imaginary-time Schr\"odinger equation of Eq.~\eqref{IT sch in sec1} to deduce the convergence condition on SQA.

\section{Imaginary-time variant of adiabatic condition}
\label{sec3}
We next discuss the adiabatic condition for the imaginary-time Schr\"odinger equation.

\subsection{General adiabatic condition for the imaginary-time Sch\"odinger equation}

To formulate the adiabatic condition, it is useful first to keep computation time to a finite value denoted as $\tau$. Then, $s$ is used for a dimensionless time scaled by $\tau$, i.e. $s=t/\tau$. We write the Hamiltonian of a generic quantum system as $\mathcal{H}(t)$. 
We have in mind the quantity in the large parentheses on the right-hand side of Eq.~\eqref{IT sch in sec1} for $\mathcal{H}(t)$ as will be discussed later.
We assume that the instantaneous ground state of $\mathcal{H}(t)$ is non-degenerate. This is justified in the present system, which is time-evolving by Eq.~\eqref{IT sch in sec1}, because the quantity in the large parentheses on the right-hand side of Eq.~\eqref{IT sch in sec1} is the Hamiltonian of a transverse-field Ising model 
with non-vanishing off-diagonal elements only between those states different by a single-spin flip (and thus expressed by $\sigma_i^x$), to which the Perron-Frobenius theorem \cite{Pillai2005} applies 
\footnote{
Also in the long-time limit $t\to\infty$, the ground state is non-degenerate since, according to Eq.~\eqref{state_correspondence}, the ground state of $\mathcal{H}(t)$ corresponds to the equilibrium Boltzmann distribution of the original classical system \cite{Nishimori2014}, which is naturally unique. This should not be confused with the $Z_2$ degeneracy of the ground state of the classical Ising model of Eq.~\eqref{eq:classical_Ising}.}.
Notice that this transverse-field Ising model defined in terms of $\hat{H}$ of Eq.~\eqref{Hamiltonian explicit in sec1} and $d(\beta \hat{H_0})/dt$ is different from the original model Eq.~\eqref{eq:Ising}.

We define $\overline{\mathcal{H}}(s)$ as 
\begin{equation}
\overline{\mathcal{H}}(s) := \mathcal{H}(t).
\end{equation}

Let $\ket{j(s)}$ denote the $j$th excited state, $j=0,1, \ldots$,
\begin{equation}
\overline{\mathcal{H}}(s)\ket{j(s)}=\epsilon_j(s)\ket{j(s)}.
\end{equation}
$\Delta_j(s)$ is used to denote the instantaneous energy gap between the $j$th excited state and the ground state, $\Delta_j(s)=\epsilon_j(s)-\epsilon_0(s)$. The instantaneous energy gap between the first excited state and the ground state is particularly written as $\Delta(s)$, i.e. $\Delta(s)=\epsilon_1(s)-\epsilon_0(s)$.

A variant of the adiabatic condition for the imaginary-time Schr\"odinger equation obtained in Ref.~\cite{Morita2008} states that: For the state vector 
\begin{align}
\ket{\psi(s)}=\sum_j c_j(s)\ket{j(s)}
\end{align}
of the imaginary-time Schr\"odinger equation, 
\begin{align}
    -\frac{1}{\tau}\frac{d\ket{\psi(s)}}{ds}=\overline{\mathcal{H}}(s)\ket{\psi(s)},
\end{align}
the coefficient $c_j(s)$ with $j\ne 0$ has the large-$\tau$ asymptotic expansion as:
\begin{equation}
\label{ineq1}
c_{j\ne 0}(s) \approx \frac{1}{\tau} \frac{\bra{j(s)}\frac{d\overline{\mathcal{H}}(s)}{ds}\ket{0(s)}}{\Delta_j(s)^2} +O(\tau^{-2}).
\end{equation}
The first term on the right-hand side of Eq.~\eqref{ineq1}, the first-order term in the asymptotic expansion from the limit of large $\tau$, should be sufficiently small for the system to be close to the instantaneous ground state $\ket{0(s)}$.
Since we ignore the second- and higher-order terms, our theory is approximate but is nevertheless expected to be accurate in the asymptotic limit of large $\tau$.

In terms of the original time variable $t$, the adiabatic condition can be written as:
\begin{equation}
\label{ineq2}
\frac{\lvert \bra{j(t)}\frac{d\mathcal{H}(t)}{dt}\ket{0(t)}\rvert} {\Delta_j(t)^2}\ll 1
\end{equation}
for $j\ne 0$.
By bounding the numerator of the left-hand side of Eq.~\eqref{ineq2} by the operator norm as shown in Appendix A,
\begin{align}
\label{eq:norm_inequality}
    \left\lvert \bra{j(t)}\frac{d\mathcal{H}(t)}{dt}\ket{0(t)}\right\rvert \le\left\lVert \frac{d\mathcal{H}(t)}{dt}\right\rVert,
\end{align}
we find that the condition is satisfied that the original system reaches thermal equilibrium (the ground state of the imaginary-time Schr\"odinger equation) in the long-time limit if the following relation is satisfied:
\begin{equation}
\label{it-adiabaticity}
\lim_{t\to\infty} \frac{\left\lVert \frac{d\mathcal{H}(t)}{dt}\right\rVert}{\Delta(t)^2}\ll 1.
\end{equation}

Notice that we apply the adiabatic condition only in the long-time limit since we are interested in minimizing the excitation probability only at the end of the process such that the system approaches the (low-temperature) equilibrium state in the long-time limit. It may therefore happen that the adiabatic condition may not necessarily be satisfied in the intermediate time region, which we do not care.

\subsection{Adiabatic condition applied to SQA}

Let us apply the condition of Eq.~\eqref{it-adiabaticity} to the problem we are interested in.

We apply the framework in the previous subsection to the case that $\mathcal{H}(t)$ is the quantity in the large parentheses on right-hand side of the imaginary-time Schr\"odinger equation \eqref{IT sch in sec1}: 
\begin{equation}
\mathcal{\hat{H}}(t)=\hat{H}(t)-\frac{1}{2}\frac{d}{dt}(\beta \hat{H}_0).
\end{equation}
Notice that $\mathcal{H}(t)$ in the previous subsection is for a generic Hamiltonian, and $\hat{\mathcal{H}}(t)$ here is the above specific one.
According to Eq.~\eqref{it-adiabaticity}, convergence of SQA to the equilibrium state is assured in the long-time limit when the following condition is satisfied:
\begin{equation}
\label{adiabatic}
\lim_{t\to\infty} \frac{\left\lVert\frac{d\mathcal{\hat{H}}(t)}{dt}\right\rVert}{\Delta(t)^2} \ll 1.
\end{equation} 

It is possible to yield an upper bound on the norm in the above numerator using Eq.~\eqref{Hamiltonian explicit in sec1}. Equation~\eqref{diagonal in sec1} together with Eq.~\eqref{difference in sec1} implies that the diagonal term can be written as a finite number of terms of interacting spins. The off-diagonal term is just a transverse-field term with an appropriate coefficient evaluated by Eq.~\eqref{offdiagonal in sec1}.
We thus find:
\begin{equation}
\begin{split}
\label{ineq tilde H}
& \left\lVert\frac{d\mathcal{\hat{H}}(t)}{dt}\right\rVert \le \\ 
& bMN \left(\Bigg\lVert\frac{d}{dt}(w_{\sigma\sigma'} e^{-\frac{1}{2}\beta\left(H_0(\sigma')- H_0(\sigma) \right)})\Bigg\rVert+\Bigg\lVert \frac{d}{dt} w_{\sigma\sigma'} \Bigg\rVert\right)\\ 
& +\frac{1}{2}\Bigg\lVert \frac{d^2}{dt^2}(\beta \hat{H}_0) \Bigg\rVert,
\end{split}
\end{equation}
where $b$ is a constant independent of $M, N, \beta$, and $t$ but is determined by the number of sites interacting with a given site.
Utilizing the relation
\begin{equation}
\Big\lVert \frac{d}{dt}(\beta H_{j,k}) \Big\rVert=\Big\lVert \gamma'(t)\sigma_j^{(k)} \left( \sigma_{j}^{(k+1)}+ \sigma_{j}^{(k-1)}\right) \Big\rVert \le 2\lvert \gamma'(t)\rvert, 
\end{equation}
we have the following bound on the first term on the right-hand side of the inequality \eqref{ineq tilde H}:
\begin{align}
& \Big\lVert \frac{d}{dt}(w_{\sigma\sigma'} e^{-\frac{1}{2}\beta\left(H_0(\sigma')- H_0(\sigma) \right)}) \Big\rVert \nonumber\\
&= \Bigg\lVert \frac{d}{dt}\left(\frac{e^{\beta H_{j,k}}}{ e^{\beta H_{j,k}}+ e^{-\beta H_{j,k}}}\right) \Bigg\rVert \nonumber\\
& = \frac{1}{2}\Bigg\lVert \frac{d}{dt}(\beta H_{j,k})\cdot {\rm sech}^2(\beta H_{j,k}) \Bigg\rVert\le \lvert \gamma'(t)\rvert.
\end{align}
An analogous computation shows that
\begin{equation}
\label{ineq4} 
\begin{aligned}
& \Bigg\lVert \frac{d}{dt}w_{\sigma\sigma'} \Bigg\rVert = \Bigg\lVert \frac{d}{dt}\frac{1}{ e^{\beta H_{j,k}}+ e^{-\beta H_{j,k}}} \Bigg\rVert \\
& = \frac{1}{2}\Bigg\lVert \frac{d}{dt}(\beta H_{j,k})\cdot {\rm tanh}(\beta H_{j,k}){\rm sech}(\beta H_{j,k}) \Bigg\rVert \\
& \le \frac{1}{2}\lvert \gamma'(t)\rvert.
\end{aligned}
\end{equation}
We also have 
\begin{equation}
\Bigg\lVert \frac{d^2}{dt^2}(\beta \hat{H}_0) \Bigg\rVert\le \sum_{k=1}^M \sum_{j=1}^N \Bigg\lVert \gamma''(t)  \sigma_j^{(k)}\sigma_j^{(k+1)} \Bigg\rVert \le MN \lvert \gamma''(t)\rvert.
\end{equation}
These computations applied to Eq.~\eqref{ineq tilde H} yield the following bound:
\begin{equation}
\label{bound tilde H}
\left\lVert\frac{d\mathcal{\hat{H}}(t)}{dt}\right\rVert \le MN\left(\frac{3b}{2}\, \lvert \gamma'(t)\rvert+\frac{1}{2}\, \lvert \gamma''(t)\rvert\right).
\end{equation}

Finding a lower bound of the energy gap $\Delta(t)$ between the first excited state and the ground state is a nontrivial step. The coefficient of the transverse field in the quantum Hamiltonian \eqref{Hamiltonian explicit in sec1} is $w_{\sigma\sigma'}$. We start from the following inequality for the coefficient $w_{\sigma\sigma'}$:
\begin{equation}
w_{\sigma\sigma'}=\frac{1}{ e^{\beta H_{j,k}}+ e^{-\beta H_{j,k}}}\ge \frac{1}{ 2\, e^{\lvert\beta H_{j,k}\rvert}}.
\end{equation}
$\lvert\beta H_{j,k}\rvert$ has the following upper bound owing to the definition of $H_{j,k}$ in Eq.~\eqref{def Hjk in sec1}:
\begin{equation}
\lvert\beta H_{j,k}\rvert\le \frac{\beta}{M}\bigg\lvert\sum_{j' \, ({\rm n.n.}\, j)}J_{jj'}\sigma_j^{(k)}\sigma_{j'}^{(k)}\bigg\rvert+2\gamma(t).
\end{equation} 
We adopt the notation 
\begin{align}
\label{p-def}
p(M):={\rm max}_{j,k}\, \frac{1}{M}\bigg\lvert\sum_{j' \, ({\rm n.n.}\, j)}J_{jj'}\sigma_j^{(k)}\sigma_{j'}^{(k)}\bigg\rvert, 
\end{align}
then $\lvert\beta H_{j,k}\rvert$ has an upper bound
\begin{equation} 
\lvert\beta H_{j,k}\rvert\le \beta p(M)+2\gamma(t).
\end{equation}
From these relations, one learns that the coefficient $w_{\sigma\sigma'}$ has a lower bound as
\begin{equation}
\label{l bound w}
w_{\sigma\sigma'}\ge \frac{1}{ 2\, e^{\beta p(M)+2\gamma(t)}}.
\end{equation}

For a generic transverse-field Ising model, a lower bound of the energy gap $\Delta(t)$ between the ground state and the first excited stated was deduced in Refs.~\cite{Somma2007, Morita2007, Morita2008}. Applying the argument given in those references to the model under discussion, we obtain the following relation:
\begin{equation}
\Delta(t)\ge A(N)\, (w_{\sigma\sigma'})^N,
\end{equation}
where the $N$-dependence of $A(N)$ is discussed below. 
Making use of the lower bound for the coefficient $w_{\sigma\sigma'}$ in Eq.~\eqref{l bound w}, we deduce the lower bound for $\Delta(t)$
\begin{equation}
\Delta(t)\ge \frac{A(N)}{2^N}e^{-N(\beta p(M)+2\gamma(t))}.
\end{equation}
The coefficient $A(N)$ does not depend on time $t$ but depends on $N$ for $N\gg 1$ as:
\begin{equation}
A(N)=a \sqrt{N} e^{-cN},
\end{equation}
where $a$ and $c$ are positive constants independent of $N$ in the asymptotic limit of large $N$. Thus, we obtain the lower bound of the energy gap $\Delta(t)$ as follows:
\begin{equation}
\label{gap lower bound}
\Delta(t)\ge \frac{a \sqrt{N}}{2^N} e^{-N(\beta p(M)+2\gamma(t)+c)}.
\end{equation}

As a result of this equation and Eq.~\eqref{bound tilde H}, we find the following bound:
\begin{equation}
\begin{split}
\label{bound explicit}
&\lim_{t\to\infty} \frac{\left\lVert\frac{d\mathcal{\hat{H}}(t)}{dt}\right\rVert}{\Delta(t)^2} \le \\
&\lim_{t\to\infty} \frac{2^{2N} M\left(3b\, \lvert \gamma'(t)\rvert+\lvert \gamma''(t)\rvert\right) }{2a^2}
 \, e^{2N(\beta p(M)+2\gamma(t)+c)}.
\end{split}
\end{equation}
Dropping constants of order one, we learn from this result that the condition 
\begin{equation}
\label{bound_condition2}
\begin{split}
&\lim_{t\to\infty}\big(3b\, \lvert \gamma'(t)\rvert+\lvert \gamma''(t)\rvert\big) e^{4N\gamma(t)}\\
& \hspace{1cm} \times M\,2^{2N} e^{2N\beta p(M)+2Nc}\ll 1
\end{split}
\end{equation}
suffices to ensure the convergence of SQA for fixed values $N, M$, and $\beta$. Notice that the time-dependent part has been collected in the first line of the above equation.

\section{The result}
\label{sec4}
We study in this section how the condition of Eq.~\eqref{bound_condition2} leads to an explicit form of the coefficient $\gamma (t)$.
\subsection{Convergence condition for closed systems}
\label{sec4.1}
Here, we deduce a condition that ensures convergence of SQA from the bound \eqref{bound_condition2}.

It is natural to assume that $\Gamma (t)$ is a monotonically decreasing function from a large value at $t=0$ toward zero in the long-time limit. Thus $\gamma (t)$ is monotonically increasing according to the definition of Eq.~\eqref{def gamma in sec1}.
Suppose that the first term in the first line on the left-hand side of Eq.~\eqref{bound_condition2} is fixed to a small constant $c_1$.
Solving the differential equation 
\begin{equation}
\lvert \gamma'(t)\rvert e^{4N\gamma(t)}=c_1,
\end{equation}
we obtain the solution in terms of $\Gamma (t)$ as follows:
\begin{equation}
\label{eq:sol1_Gamma}
\Gamma(t) = \frac{M}{\beta} {\rm tanh}^{-1}\left( \frac{1}{(4N)^{1/2N} (c_1 t+c_2)^{1/2N}}\right),
\end{equation}
where $c_2$ is an integral constant.  With this form of $\Gamma(t)$, one finds that $\lvert \gamma''(t)\rvert e^{4N\gamma(t)}=c_1^2 (c_1 t+c_2)^{-1}$, which tends to 0 as $t\to\infty$; therefore the solution \eqref{eq:sol1_Gamma} satisfies the aymptotic adiabatic condition for the imaginary-time Schr\"odinger equation, given that $c_1$ is chosen small enough as required in Eq.~\eqref{bound_condition2}. 

This result inspires us to analyze a more general function of the following form for the coefficient of the transverse field $\Gamma(t)$:
\begin{equation}
\label{eq:Gamma_general}
\Gamma(t) = \frac{M}{\beta} {\rm tanh}^{-1}\left( \frac{1}{(c_1 t+c_2)^{g(t)}}\right).
\end{equation}
Here, $g(t)$ denotes a twice-differentiable, strictly positive function, $g(t)>0$, and $c_1$ and $c_2$ denote constants with $c_1>0$ as we mentioned previously. For $\Gamma(t)$ of the form \eqref{eq:Gamma_general} to ensure convergence, conditions are imposed on the functions $g(t)$, $g'(t)$, and $g''(t)$. We state this main result as the following proposition:

\begin{proposition}
\label{prop1}
Simulated quantum annealing (SQA) for closed systems converges to thermal equilibrium of the final Ising Hamiltonian in the long-time limit when the function $g(t)$ in $\Gamma(t)$ of Eq.~\eqref{eq:Gamma_general} satisfies the following conditions for sufficiently large $t$,
\begin{align}
    &0<g(t)\le \frac{1}{2N}
    \label{eq:condition_1},\\
    &
    \lvert g'(t)\rvert \le \frac{c'}{(c_1 t+c_2) \log(c_1 t+c_2)},
    \label{eq:condition_2}\\
    & \lvert g''(t)\rvert \le \frac{c''}{(c_1 t+c_2) \log(c_1 t+c_2)},
    \label{eq:condition_3}
\end{align}
where $c'$ and $c''$ denote positive constants, provided that the constants $c_1$, $c'$, and $c''$ are chosen small enough satisfying
\begin{align}
    \left(\frac{3b\,c_1}{4N}+\frac{3b\,c'}{2}+\frac{c''}{2}\right) M\, 2^{2N} e^{2N\beta p(M) +2Nc}\ll 1.
    \label{c-conditions}
\end{align}
for fixed, possibly large, values of $M, N$, and $\beta$.
\end{proposition}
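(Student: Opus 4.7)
The plan is to verify directly that the ansatz \eqref{eq:Gamma_general} under hypotheses \eqref{eq:condition_1}--\eqref{eq:condition_3} makes the left-hand side of the convergence criterion \eqref{bound_condition2} bounded by the quantity appearing in \eqref{c-conditions}, and hence arbitrarily small once $c_1, c', c''$ are picked small enough. The whole argument is a careful calculation; the only conceptual input is to choose a clean parametrization of $\gamma(t)$.

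The first step is to exploit the elementary identity $\coth(\tanh^{-1} x) = 1/x$ in the definition \eqref{def gamma in sec1}. Substituting $\Gamma(t)$ from \eqref{eq:Gamma_general} collapses $\gamma(t)$ to the remarkably simple form
\begin{equation}
\gamma(t) = \tfrac{1}{2}\, g(t)\,\log(c_1 t + c_2),
\end{equation}
so that $e^{4N\gamma(t)} = (c_1 t + c_2)^{2N g(t)}$. Under hypothesis \eqref{eq:condition_1} this satisfies $e^{4N\gamma(t)} \le c_1 t + c_2$, which is the key bound that will cancel an inverse factor $(c_1 t + c_2)^{-1}$ arising from differentiation.

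Next I would differentiate this closed form twice. A direct computation yields
\begin{equation}
\gamma'(t) = \tfrac{1}{2}\,g'(t)\log(c_1 t + c_2) + \tfrac{g(t)\, c_1}{2(c_1 t + c_2)},
\end{equation}
and
\begin{equation}
\gamma''(t) = \tfrac{1}{2}\,g''(t)\log(c_1 t + c_2) + \tfrac{g'(t)\, c_1}{c_1 t + c_2} - \tfrac{g(t)\, c_1^2}{2(c_1 t + c_2)^2}.
\end{equation}
Inserting the bounds \eqref{eq:condition_1}--\eqref{eq:condition_3} term by term, each of these derivatives is controlled by a constant multiple of $1/(c_1 t + c_2)$: the $\log(c_1 t + c_2)$ factors are absorbed by the denominators supplied by $g', g''$, and the explicit $1/(c_1 t + c_2)$ contributions are already of the right order. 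The leading-$t$ bounds read $|\gamma'(t)| \le (c'/2 + c_1/(4N))/(c_1 t + c_2)$ and $|\gamma''(t)| \le c''/(2(c_1 t + c_2))\,(1 + o(1))$.

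Finally I would multiply these bounds by $e^{4N\gamma(t)} \le c_1 t + c_2$, which exactly cancels the denominators and produces a constant ceiling for $(3b|\gamma'(t)| + |\gamma''(t)|)\, e^{4N\gamma(t)}$ of size $3bc_1/(4N) + 3bc'/2 + c''/2 + o(1)$ as $t\to\infty$. Multiplying by the $t$-independent prefactor $M\,2^{2N} e^{2N\beta p(M) + 2Nc}$ from \eqref{bound_condition2} gives the left-hand side of \eqref{c-conditions}, so smallness of $c_1, c', c''$ (for fixed $N, M, \beta$) forces the criterion \eqref{bound_condition2} and hence the asymptotic adiabatic condition, which is the desired convergence of SQA. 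The main bookkeeping obstacle is keeping track of the lower-order (subleading in $1/t$) corrections in $\gamma''(t)$ and showing they genuinely vanish in the limit, so that only the explicit coefficients $3bc_1/(4N)$, $3bc'/2$ and $c''/2$ survive with the stated combination.
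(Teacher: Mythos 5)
Your proposal is correct and follows essentially the same route as the paper: both reduce $\gamma(t)$ to $\tfrac{1}{2}g(t)\log(c_1 t+c_2)$, differentiate twice, and show that $(3b\lvert\gamma'\rvert+\lvert\gamma''\rvert)e^{4N\gamma}$ is asymptotically bounded by $3bc_1/(4N)+3bc'/2+c''/2$, which plugged into Eq.~\eqref{bound_condition2} gives Eq.~\eqref{c-conditions}. The only (cosmetic) difference is that you bound $e^{4N\gamma(t)}\le c_1t+c_2$ and the derivatives separately before multiplying, whereas the paper expands the products $\gamma'e^{4N\gamma}$ and $\gamma''e^{4N\gamma}$ directly term by term; the resulting estimates are identical.
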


\begin{proof}

First, we show that the conditions \eqref{eq:condition_1} and \eqref{eq:condition_2} are necessary for the QA convergence. When $\Gamma(t)$ takes the form \eqref{eq:Gamma_general}, $\gamma(t)$ is given as 
\begin{equation}
\label{eq for gamma}
\gamma(t)=\frac{g(t)}{2}\log\,(c_1 t+c_2).
\end{equation}
Then, $\gamma'(t) e^{4N\gamma(t)}$ has the following expression:
\begin{equation}
\begin{split}
\label{eq:1st_bound}
\gamma'(t) e^{4N\gamma(t)} &= \frac{g(t)}{2}c_1\,(c_1 t+c_2)^{2Ng(t)-1}\\ 
&+\frac{g'(t)}{2}(c_1 t+c_2)^{2Ng(t)}\log\,(c_1 t+c_2).
\end{split}
\end{equation}
The first term on the right-hand side of Eq.~\eqref{eq:1st_bound} needs to be bounded from above by a finite constant as $t\to\infty$, which requires $2Ng(t)-1\le 0$. This yields the first stated condition \eqref{eq:condition_1}. 
For a similar reason, the second term on the right-hand side of Eq.~\eqref{eq:1st_bound} needs to be bounded from above by a finite constant as $t\to\infty$, which yields the second condition \eqref{eq:condition_2}. 
When those conditions of Eqs.~\eqref{eq:condition_1} and \eqref{eq:condition_2} are satisfied, $\lvert \gamma'(t)\rvert e^{4N\gamma(t)}$ has the following bound:
\begin{eqnarray}
\lvert \gamma'(t)\rvert e^{4N\gamma(t)} 
\le \frac{c_1}{4N}+\frac{c'}{2}.
\end{eqnarray}
Thus, we find that the term $\lvert \gamma'(t)\rvert e^{4N\gamma(t)}$ can be made arbitrarily small when $c_1$ and $c'$ are chosen sufficiently small under the conditions \eqref{eq:condition_1} and \eqref{eq:condition_2}.

Next we explain that the condition \eqref{eq:condition_3} is also necessary to keep the term $\lvert\gamma''(t)\rvert e^{4N\gamma(t)}$ arbitrarily small as $t\to\infty$. 
With Eq.~\eqref{eq for gamma} for $\gamma (t)$, we obtain the following expression for $\gamma''(t) e^{4N\gamma(t)}$:
\begin{equation}
\begin{split}
\label{eq:2nd_bound}
\gamma''(t) e^{4N\gamma(t)}& = c_1 g'(t)\,(c_1 t+c_2)^{2Ng(t)-1}\\ 
& -c_1^2\frac{g(t)}{2}(c_1 t+c_2)^{2Ng(t)-2}\\ 
&+\frac{g''(t)}{2}(c_1 t+c_2)^{2Ng(t)}\log\,(c_1 t+c_2).
\end{split}
\end{equation}
For the third term on the right-hand side of Eq.~\eqref{eq:2nd_bound} to be bounded from above by a finite constant, we need the condition \eqref{eq:condition_3}.

One can confirm that the term $\lvert \gamma''(t)\rvert e^{4N\gamma(t)}$ can be made arbitrarily small as $t\to\infty$ under the conditions \eqref{eq:condition_1}, \eqref{eq:condition_2}, and \eqref{eq:condition_3}, when $c''$ is chosen sufficiently small. 
In fact, one has the following bound under the stated conditions:
\begin{equation}
\label{eq:eval_gamma2}
\begin{split}
&\lvert\gamma''(t)\rvert e^{4N\gamma(t)} \le c_1 \lvert g'(t)\rvert\,(c_1 t+c_2)^{2Ng(t)-1} \\
&+c_1^2\frac{g(t)}{2}(c_1 t+c_2)^{2Ng(t)-2}\\ 
&+\frac{\lvert g''(t)\rvert}{2}(c_1 t+c_2)^{2Ng(t)}\log\,(c_1 t+c_2)\\ 
&\le \frac{c_1 c'}{(c_1 t+c_2) \log(c_1 t+c_2)}+\frac{c_1^2}{4N}(c_1 t+c_2)^{-1}+\frac{c''}{2}.
\end{split}
\end{equation}
In the last line of this equation, the first two terms tend to 0 as $t\to\infty$; the last term becomes arbitrarily small when $c''$ is chosen small enough by following Eq.~\eqref{c-conditions}. This concludes the proof of the proposition.
\end{proof}

\noindent {\it Remark 1}.\\
Notice that $g(t)$ should increase as $t\to \infty$ if we demand $\Gamma(t)$ to decrease faster than a polynomial of $t$. Therefore, we learn that the coefficient $\Gamma(t)$ of the form \eqref{eq:Gamma_general} cannot decrease faster than a polynomial of $t$.  This result can be compared with a similar result deduced in Ref.~\cite{Kimura2022} for the real-time Schr\"odinger dynamics.
\vspace{3mm}

\noindent {\it Remark 2}.\\
The expression of Eq.~\eqref{eq:sol1_Gamma} for $\Gamma (t)$ turns out to be best possible according to Eq.~\eqref{eq:condition_1}.
\vspace{3mm}

\noindent
{\it Remark 3}.\\
By choosing $\beta$ and $M$ sufficiently large, the system would become sufficiently close to the ground state of the classical Ising model of Eq.~\eqref{eq:classical_Ising} in the long-time limit.

\subsection{Convergence condition for open systems}

The standard model of environmental effects for open systems is the following Hamiltonian with an additional term involving bosonic degrees of freedom \cite{Werner2005}:
\begin{align}
\label{eq:bosonic_environment}
    H(t)&=H_{\rm Ising}+H_{\rm TF}(t)\nonumber\\
    &+\sum_{j,l}\Big( c_l (a_{j,l}^{\dagger}+a_{j,l})\sigma_j^z +\omega_{j,l}a_{j,l}^{\dagger}a_{j,l}\Big),
\end{align}
where $c_l$ is the coupling strength of the $l$th oscillator to the spin system and $\omega_{j,l}$ is its frequency \footnote{The effect represented by the bosonic term in \eqref{eq:bosonic_environment} is different from the effect of temperature in SQA introduced by $\beta$ as mentioned in Ref.~[24]}. We assume that the Ohmic spectral function, $J(\omega)$, with coupling strength $\alpha$ and a cut-off frequency $\omega_c$ is given as 
\begin{align}
    J(\omega)=4\pi \sum_l c_l^2\delta(\omega-\omega_{j,l})=
    \begin{cases}
    2\pi \alpha \omega & (\omega< \omega_c)\\
    0 &(\omega\ge \omega_c)
    \end{cases}
    .
\end{align}
Then, tracing out the bosonic degrees of freedom in the Suzuki-Trotter formulation, we obtain the following expression representing environmental effects to be added to the original Trotterized Ising Hamiltonian $H_0$ \cite{Werner2005}: 
\begin{align}
\label{appended term}
-\frac{\alpha}{2}\left(\frac{\pi}{M}\right)^2\sum_{j=1}^N\sum_{k>k'}\left( {\rm sin} \frac{\pi \lvert k-k' \rvert}{M}\right)^{-2} \sigma_j^{(k)}\sigma_j^{(k')}.
\end{align}
Since this term is independent of time $t$, the theory developed for closed systems remains almost unchanged.
Including the term Eq.~\eqref{appended term} modifies $H_{j,k}$ in Eq.~\eqref{def Hjk in sec1} and $p(M)$ in Eq.~\eqref{p-def} as:
\begin{align}
\label{def Hjk modified}
&-\beta H_{j,k}=\nonumber\\
&\frac{\beta}{M}\sum_{j' \, ({\rm n.n.}\, j)}J_{jj'}\sigma_j^{(k)}\sigma_{j'}^{(k)}+\gamma(t)\sigma_j^{(k)} \left( \sigma_{j}^{(k+1)}+ \sigma_{j}^{(k-1)}\right)\nonumber\\
&+\frac{\alpha}{2}\left(\frac{\pi}{M}\right)^2\sum_{k'(\ne k)}\left( {\rm sin} \frac{\pi \lvert k-k' \rvert}{M}\right)^{-2} \sigma_j^{(k)}\sigma_j^{(k')} 
\end{align}
and 
\begin{align}
\label{p-def-open}
&p(M)={\rm max}_{j,k}\, \bigg\lvert\frac{1}{M}\sum_{j' \, ({\rm n.n.}\, j)}J_{jj'}\sigma_j^{(k)}\sigma_{j'}^{(k)}\nonumber\\
&+\frac{\alpha}{2\beta}\left(\frac{\pi}{M}\right)^2\sum_{k'(\ne k)}\left( {\rm sin} \frac{\pi \lvert k-k' \rvert}{M}\right)^{-2} \sigma_j^{(k)}\sigma_j^{(k')}\bigg\rvert, 
\end{align}
respectively.
Proposition 1 applies with those minor amendments.
 We summarize this observation as a Corollary.
\begin{corollary}
Proposition 1 applies to open systems with the parameter $p(M)$ in Eq.~\eqref{p-def-open}.
\end{corollary}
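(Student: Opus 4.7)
The plan is to re-run the argument of Proposition~1 and point out which pieces change and which do not under the substitution $H_{j,k}\to$ Eq.~\eqref{def Hjk modified} and $p(M)\to$ Eq.~\eqref{p-def-open}. The key observation driving everything is that the extra bosonic contribution in Eq.~\eqref{appended term} is \emph{independent of $t$}, so it cannot appear in any of the time derivatives that enter the adiabatic condition~\eqref{adiabatic}. Thus the upper bound~\eqref{bound tilde H} on $\lVert d\hat{\mathcal{H}}(t)/dt\rVert$ continues to hold with exactly the same constants: $d(\beta H_{j,k})/dt$ still equals $\gamma'(t)\sigma_j^{(k)}(\sigma_j^{(k+1)}+\sigma_j^{(k-1)})$, so the bounds on $\lVert d(w_{\sigma\sigma'}e^{-\frac12\beta(H_0(\sigma')-H_0(\sigma))})/dt\rVert$ and $\lVert dw_{\sigma\sigma'}/dt\rVert$ are unchanged, and $\lVert d^2(\beta\hat{H}_0)/dt^2\rVert\le MN|\gamma''(t)|$ is likewise unchanged because only the $\gamma(t)$-dependent piece of $H_0$ contributes a second time derivative.

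Next I would redo the lower bound on $w_{\sigma\sigma'}$. The inequality $w_{\sigma\sigma'}\ge 1/(2e^{|\beta H_{j,k}|})$ still applies, but $|\beta H_{j,k}|$ now picks up the additional bosonic term. Using the modified form Eq.~\eqref{def Hjk modified} and absorbing this new contribution into the maximum, one gets
\begin{equation}
|\beta H_{j,k}|\le \beta p(M)+2\gamma(t),
\end{equation}
\emph{verbatim} in form, but now with $p(M)$ given by Eq.~\eqref{p-def-open}. Hence the lower bound $w_{\sigma\sigma'}\ge \frac{1}{2}e^{-\beta p(M)-2\gamma(t)}$ continues to hold, and the Perron--Frobenius/Somma--Morita argument \cite{Somma2007,Morita2007,Morita2008} for the ground-state gap $\Delta(t)\ge A(N)(w_{\sigma\sigma'})^N$ is untouched since it only requires the existence of a lower bound of this form on the off-diagonal coupling together with non-degeneracy of the ground state (still guaranteed by Perron--Frobenius for the single-spin-flip transverse-field structure, which is unaffected by the time-independent longitudinal-in-Trotter-direction interaction). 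Therefore Eq.~\eqref{gap lower bound} holds as written, with the new $p(M)$.

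Plugging these two ingredients back into Eq.~\eqref{bound explicit} gives exactly the same convergence condition Eq.~\eqref{bound_condition2}, with the sole replacement of $p(M)$ by Eq.~\eqref{p-def-open}. From here the derivation of Proposition~1 is a purely analytic manipulation of $\gamma(t)$ and $g(t)$ that does not involve $p(M)$ at all, so it carries over without modification, yielding the same conditions \eqref{eq:condition_1}--\eqref{eq:condition_3} and the same smallness requirement \eqref{c-conditions} on $(c_1,c',c'')$ with the new $p(M)$.

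The only genuine technical point — and the one I would flag as the main obstacle — is justifying that the ground state of $\hat{\mathcal{H}}(t)$ remains non-degenerate in the open-system setting, since the added term in Eq.~\eqref{appended term} couples distant Trotter slices and thus modifies the diagonal part of the quantum Hamiltonian of Eq.~\eqref{Hamiltonian explicit in sec1}. However, the off-diagonal structure is still that of a transverse-field Ising model (single-spin flips between Trotter configurations), so Perron--Frobenius still guarantees a unique ground state of $\hat{\mathcal{H}}(t)$, and the Morita--Nishimori lower bound on $\Delta(t)$ still applies as it only requires this off-diagonal connectivity together with a lower bound on the associated hopping amplitudes. With that observation in place, the Corollary follows immediately by appealing to Proposition~1 with the amended $p(M)$.
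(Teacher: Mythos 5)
Your proposal is correct and follows essentially the same route as the paper, whose entire argument for the Corollary is the observation that the appended environmental term \eqref{appended term} is time-independent, so only $H_{j,k}$ and $p(M)$ need amending while the derivative bounds and the rest of Proposition~1 carry over verbatim. You simply spell out the intermediate steps (unchanged bound \eqref{bound tilde H}, modified lower bound on $w_{\sigma\sigma'}$, persistence of the gap bound and of non-degeneracy via Perron--Frobenius) in more detail than the paper does.
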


\subsection{Bounded coefficients}
The Schr\"odinger dynamics of the following form is often considered
\begin{equation}
\label{eq:bounded_Sch}
i\frac{d}{dt}\ket{\psi(t)}=\Big(s(t)\, H_{\rm Ising}-\left(1-s(t)\right)\sum_{j=1}^N \sigma^x_j\Big)\ket{\psi(t)},
\end{equation}
where $s(t)$ is a monotonically increasing function with $0\le s(t) \le 1$, in place of the Hamiltonian \eqref{eq:Ising}. We show that slight amendments of discussions of previous sections make it possible to apply the results, Proposition 1 and Corollary 1, to this case. Equations are written for real-time Schr\"odinger dynamics, which applies to the imaginary-time version as well.

One can rewrite Eq.~\eqref{eq:bounded_Sch} as 
\begin{equation}
\label{eq:bounded_Sch_2}
\frac{i}{s(t)}\frac{d}{dt}\ket{\psi(t)}= \Big(H_{\rm Ising}-\frac{1-s(t)}{s(t)} \sum_{j=1}^N \sigma_j^x\Big)\ket{\psi(t)}.
\end{equation}
We define a function $\tilde{t}$, that is a monotonic function of time $t$, as
\begin{equation}
\tilde{t}\equiv \int_0^t dt\, s(t).
\end{equation}
Using the introduced function $\tilde{t}$, one can rewrite Eq.~\eqref{eq:bounded_Sch_2} as follows:
\begin{equation}
\label{eq:bounded_Sch_3}
i\frac{d}{d\tilde{t}} \ket{\psi(t)} = \Big(H_{\rm Ising} - \frac{1-s(t)}{s(t)}\sum_{j=1}^N\sigma^x_j\Big)\ket{\psi(t)}.
\end{equation}

Now, we introduce a function $\Gamma(\tilde{t})$ defined as $\Gamma(\tilde{t})\equiv \left(1-s(t)\right)/s(t)$ to rewrite Eq.~ \eqref{eq:bounded_Sch_3} as
\begin{equation}
\label{eq:bounded_Sch_4}
i\frac{d}{d\tilde{t}} \ket{\psi(t)} =\Big( H_{\rm Ising} - \Gamma(\tilde{t}) \sum_{j=1}^N\sigma^x_j\Big)\ket{\psi(t)}.
\end{equation}
The arguments given in the previous sections applied to Eq.~\eqref{eq:bounded_Sch_4} reveal that, in the limit $\tilde{t}\to\infty$, $\Gamma(\tilde{t})$ proportional to $c_1 \tilde{t}^{-\tilde{g}(\tilde{t})}$,
\begin{equation}
\label{eq:exp_s}
\Gamma(\tilde{t})=\frac{1-s(t)}{s(t)}\propto (c_1\tilde{t})^{-\tilde{g}(\tilde{t})},
\end{equation}
where $\tilde{g}(\tilde{t})$ is subject to the conditions \eqref{eq:condition_1}, \eqref{eq:condition_2}, and \eqref{eq:condition_3}, ensures convergence. We used the fact that \begin{align}
\Gamma(t)=\frac{M}{\beta} {\rm tanh}^{-1}\left(\frac{1}{(c_1 t+c_2)^{g(t)}}\right)\sim \frac{M}{\beta}(c_1 t+c_2)^{-g(t)}
\end{align}
when $t$ is sufficiently large, and $c_2$ is suppressed in the large $t$ limit. 

An expression for $s(t)$ is obtained from Eq.~ \eqref{eq:exp_s} as
\begin{equation}
s(t)=\frac{1}{1+c_1\tilde{t}^{-\tilde{g}(\tilde{t})}} \approx 1-(c_1\tilde{t})^{-\tilde{g}(\tilde{t})}~~~ (\tilde{t}\gg 1).
\end{equation}

For large $t$, in some situations, $\tilde{t}$ can be approximated by $t$. For example, one can choose $s(t)={\rm tanh}\, t$, then $\tilde{t} = \log({\rm cosh}\, t) \approx t$ for sufficiently large $t$ ($t\gg 1$). 

\section{Discussion}
\label{sec5}
We have derived a condition that simulated quantum annealing (SQA) converges to thermal equilibrium for generic closed and open systems using the approximate, but asymptotically correct, adiabatic condition for the imaginary-time Schr\"odinger equation, to which the classical master equation governing SQA has been reduced. The result is in qualitative agreement with the rigorous version of convergence condition for closed systems derived from the theory of inhomogeneous Markov process \cite{Morita2006,Morita2008}, where the condition of convergence has been proved to be:
\begin{align}
\label{eq:inhomogeneous_result}
\Gamma (t) \ge \frac{M}{\beta} \tanh^{-1}\frac{1}{(t+2)^{2/RL_1}}.
\end{align}
Here $R$ is a constant of order $N$ and $L_1$ is an $N$-independent constant. Notice that the unit of time is different between those approaches: discrete steps for the above result in Refs.~\cite{Morita2006,Morita2008} and continuous time evolution in the present paper, resulting in the difference of coefficients of $t$, unity in Eq.~\eqref{eq:inhomogeneous_result} and $c_1$ in Eq.~\eqref{eq:Gamma_general}. It is anyway encouraging that essentially the same conclusion has been reached by two completely different methods, partly because the present, physics-oriented, approach may be more flexible in applications to other problems  than using the mathematically-solid, yet hard to generalize, theory of inhomogeneous Markov process.

It is worth attention that there is no essential difference in the convergence conditions for closed and open systems. The origin of this fact is that the additional term for environmental effects in Eq.~\eqref{appended term} is time independent. We thus conclude that phase-flip errors caused by the environment as represented by Eq.~\eqref{eq:bosonic_environment} does not essentially modify the generic bound on the rate of change of the time-dependent coefficient.

Another interesting fact is that the time dependence of the coefficient $\Gamma (t)$ of Eq.~\eqref{eq:sol1_Gamma} is quite similar to the corresponding convergence condition for real-time QA for closed systems as derived earlier under rigorous \cite{Kimura2022} and asymptotic \cite{Somma2007,Morita2008,Morita2007} adiabatic conditions. 
For example, the asymptotic version of convergence condition under the real-time Schr\"dinger dynamics is
\begin{align}
\label{polyn_power}
    \Gamma (t)=a (\delta t +c)^{-1/(2N-1)},
\end{align}
where $a, \delta$, and $c$ are time-independent constants. On the side of the imaginary-time Schr\"odinger dynamics, when time $t$ is large, Eq.~\eqref{eq:sol1_Gamma} becomes
\begin{equation}
\label{eq:sol1_approx}
\Gamma(t) \sim \frac{M}{\beta (4N)^{1/2N}} (c_1 t+c_2)^{-1/2N}.
\end{equation}
The forms of the two equations, \eqref{polyn_power} and \eqref{eq:sol1_approx}, are quite analogous.
It is difficult to explain intuitively why those two completely different dynamical processes follow almost identical convergence conditions.  This is particularly so, given that the real-time Schr\"odinger dynamics operates at zero temperature whereas the present results, Proposition 1 and Corollary 1, apply at any temperature, high or low, although one usually chooses very low temperatures to study the ground-state behavior of the system by SQA. We cannot exclude the possibility of an accidental coincidence, but also there may lie some deep physics behind this fact. We leave it to future work to try to answer this intriguing question.

\begin{acknowledgments}
This work is based on a project JPNP16007 commissioned by the New Energy and Industrial Technology Development Organization (NEDO). 
\end{acknowledgments}

\appendix
\label{appendix}
\section{Matrix element and operator norm}
We prove the inequality \eqref{eq:norm_inequality}.
Given an operator $\mathcal{A}$, the definition of the operator norm is 
\begin{equation}
\lVert \mathcal{A}\rVert:={\rm sup}_{\lVert v\rVert=1} \lVert\mathcal{A}v\rVert,
\end{equation}
where one should note that $\lVert \cdots \rVert$ on the left-hand side is the operator norm whereas the one on the right hand side is the norm of a vector.
Therefore, for any vectors $v$ and $w$ of norm 1, $\lVert v\rVert=\lVert w\rVert=1$, 
\begin{equation}
\lvert\bra{w}\mathcal{A}\ket{v}\rvert\le \lVert w\rVert\cdot\lVert\mathcal{A}v\rVert=\lVert\mathcal{A}v\rVert\le\lVert\mathcal{A}\rVert.
\end{equation}

%\bibliographystyle{unsrt.bst}
%\bibliography{ref.bib}
%apsrev4-2.bst 2019-01-14 (MD) hand-edited version of apsrev4-1.bst
%Control: key (0)
%Control: author (8) initials jnrlst
%Control: editor formatted (1) identically to author
%Control: production of article title (0) allowed
%Control: page (0) single
%Control: year (1) truncated
%Control: production of eprint (0) enabled
%

\end{document}